\newcommand{\be}{\begin{equation}}
\newcommand{\ee}{\end{equation}}
\newcommand{\beq}{\begin{equation*}}
\newcommand{\eeq}{\end{equation*}}
\newcommand{\argmin}{\mathop{\rm argmin}}
\newcommand{\AutoAdjust}[3]{\mathchoice{ \left #1 #2  \right #3}{#1 #2 #3}{#1 #2 #3}{#1 #2 #3} }
\newcommand{\Xcomment}[1]{{}}
\newcommand{\InParentheses}[1]{\AutoAdjust{(}{#1}{)}}
\newcommand{\InBrackets}[1]{\AutoAdjust{[}{#1}{]}}% {\left[{#1}\right]}
\newcommand{\Ex}[2][]{\operatorname{\mathbf E}_{#1}\InBrackets{#2}}
\newcommand{\Exlong}[2][]{\operatornamewithlimits{\mathbf E}\limits_{#1}\InBrackets{#2}}
\newcommand{\Prx}[2][]{\operatorname{\mathbf{Pr}}_{#1}\InBrackets{#2}}
\newcommand{\Prlong}[2][]{\operatornamewithlimits{\mathbf{Pr}}\limits_{#1}\InBrackets{#2}}
\def\prob{\Prx}
\newcommand{\vect}[1]{\ensuremath{\mathbf{#1}}}
\newcommand{\mat}[1]{\ensuremath{\mathbf{#1}}}
\newcommand{\eqdef}{\stackrel{\textrm{def}}{=}}
\newcommand{\abs}[1]{\left|#1\right|}
\newcommand{\indic}{\mathbb{I}}
\newcommand{\pos}[1]{\left[#1\right]^{+}}
\newcommand{\onenorm}[1]{\left\| {#1} \right\|_1}
\newcommand{\trans}[1]{{#1}^{\top}}
\newcommand{\onevL}[1][L]{\vect{\mathds{1}}_{\!_#1}}
\newcommand{\onevR}[1][R]{\vect{\mathds{1}}_{\!_#1}}
\newcommand{\onevLt}[1][L]{\vect{\mathds{1}}_{\!_#1}^{\top}}
\newcommand{\onevRt}[1][R]{\vect{\mathds{1}}_{\!_#1}^{\top}}
\newcommand{\uti}{\texttt{\bf Surplus}}
\newcommand{\rev}{\texttt{\bf Rev}}
\newcommand{\val}{v}
\newcommand{\vali}[1][i]{\val_{#1}}
\newcommand{\vals}{\vect{\val}}
\newcommand{\hval}{\widehat{v}}
\newcommand{\hvali}[1][i]{\hval_{#1}}
\newcommand{\hvals}{\widehat{\vect{\val}}}
\newcommand{\dist}{\mathcal{F}}
\newcommand{\alg}{\mathcal{A}}
\newcommand{\VA}{\texttt{V-add}}
\newcommand{\alloc}{X}
\newcommand{\GET}[1][\vals]{X_{\!\!_L}\!\!\left(#1\right)}
\newcommand{\SOLD}[1][\vals]{X_{\!\!_R}\!\!\left(#1\right)}
\newcommand{\SL}{S_{\!\!_L}}
\newcommand{\SR}{S_{\!\!_R}}
\newcommand{\OPT}{\texttt{opt}}
\newcommand{\optval}{\text{opt}}
\newcommand{\Ld}{\mat{L}}
\newcommand{\Rd}{\mat{R}}
\newcommand{\Lv}{\vect{l}}
\newcommand{\Lvi}[1][i]{l_{#1}}
\newcommand{\Rv}{\vect{r}}
\newcommand{\Rvj}[1][j]{r_{#1}}
\newcommand{\Lvh}{\widehat{\vect{l}}}
\newcommand{\Lvhi}[1][i]{\widehat{l}_{#1}}
\newcommand{\sv}{\vect{s}}
\newcommand{\edgeinopt}[2][(i,j)]{\indic_{\texttt{opt}(#2)}^{#1}}
\newcommand{\chiopt}[1][(i,j)]{\chi_{\texttt{opt}}^{#1}}
\newcommand{\edgefree}[1][i,j]{\indic_{\texttt{free}}^{#1}}
\newcommand{\alphav}{\bm{\alpha}}
\newcommand{\betav}{\bm{\beta}}
\newcommand{\alphavt}{{\trans{\alphav}}}
\newcommand{\deltaL}{{\bm{\delta}_{\!\!_L}}}
\newcommand{\deltaR}{{\bm{\delta}_{\!\!_R}}}
\newcommand{\deltaLh}{{\widehat{\bm{\delta}}_{\!\!_L}}}
\newcommand{\deltaRh}{{\widehat{\bm{\delta}}_{\!\!_R}}}
\newcommand{\Ocomplex}[1]{O\left(#1\right)}
\newcommand{\rvX}{\mathbb{X}}
\newcommand{\rvY}{\mathbb{Y}}
\newcommand{\rvYL}{\mathbb{Y}_{_L}}
\newcommand{\rvYR}{\mathbb{Y}_{_R}}
\newcommand\R{\mathbb{R}}
\newcommand\Q{\mat{Q}}
\newcommand\A{\mat{A}}
\newcommand\Ap{\mat{A}^{+}}
\newcommand\M{\mat{M}}
\newcommand\eps{\varepsilon}
\newtheorem*{claim*}{Claim}
\newtheorem{lemma}{Lemma}[section]
\newtheorem{theorem}{Theorem}[section]
\newtheorem{fact}{Fact}[section]
\theoremstyle{definition}
\definecolor{WildStrawberry}{RGB}{255,67,164}
\title{Prophet inequality for bipartite matching: \\merits of being simple and non adaptive}
\author{Nick Gravin}
\affiliation{%
	\institution{ITCS, Shangai University of Finance and Economics}}
\author{Hongao Wang}
\affiliation{%
	\institution{Nanyang Technological University}}
\date{} % Date for the report
\begin{abstract}
We consider Bayesian online selection problem of a matching in bipartite graphs, i.e., online weighted matching problem with edge arrivals where online algorithm knows distributions of weights, that corresponds to the intersection of two matroids in Kleinberg and Wienberg~\cite{KW12} model. We consider a simple class of non adaptive vertex-additive policies that assign static prices to all vertices in the graph and accept each edge only if its weight exceeds the sum of the prices of the edge's endpoints. We show existence of a vertex-additive policy with the expected payoff of at least one third of the prophet's payoff and present gradient decent type algorithm that quickly converges to the desired vector of vertex prices. This improves the adaptive online policy of~\cite{KW12} for the intersection of two matroids in two ways: our policy is non adaptive and has better approximation guarantee of $3$ instead of previous guarantee of $5.82$ against the prophet. We give a complementary lower bound of $2.25$ for any online algorithm in the bipartite matching setting.
\end{abstract}
\begin{document}
%\date{}

\maketitle

%\thispagestyle{empty}
%\addtocounter{page}{-1}
%\newpage

\section{Introduction}
\label{sec:intro}
%\begin{enumerate}
%\item Prophet Inequality
%\item Matroid prophet inequality (Kleinberg and Weinberg)
%\item Applications: prophet inequalities for revenue and welfare maximization: 
%\begin{itemize}
%\item sequential posted pricing (revenue maximization): Chawla et al.
%\item Combinatorial auctions (welfare): Feldman et al.
%\item ad words problem: buyers arrive online Alai et al.,
%\end{itemize}
%\item Matching markets is the central application: our setting. 
%\item Robust and simple algorithms
%\begin{itemize}
%\item Important in strategic setting. 
%\item Simple threshold policies: non adaptive and uniform.
%\item Discuss Kleinberg and Weinberg's algorithm: adaptive solution, not robust. 
%\end{itemize}
%\item Our results
%\end{enumerate}

%Prophet Inequality: basic setting and statement.
The topic of prophet inequality broadly refers to the study of trade offs between online and offline selection algorithms in Bayesian settings. The first fundamental result is due to Krengel and Suncheon~\cite{krengelS77}, who considered the following problem in the optimal stopping theory. Let $X_1,\dots,X_n$ be a sequence of independent, non-negative random variables with $\Ex{X_i}<+\infty$, then there is an online stopping policy $\tau$ such that 
\begin{equation}
\label{eq:prophet_inequality_basic}
2\cdot \Ex{X_{\tau}}\ge\Ex{\max_i X_i}.
\end{equation}
I.e., if a gambler plays a sequence of games with rewards $X_1,...,X_n$ and who can stop at any time and take the most recent reward, then a prophet who can foretell which reward to take in the whole sequence cannot gain more than two times the reward of the gambler. In fact, it was later shown by Samuel-Cahn~\cite{Cahn84} that a simple stopping policy with a uniform threshold $c:~\tau=\argmin_i (X_i\ge c)$ gives the same approximation guarantee of $2$.

%CS prophet inequality, Matroid prophet inequality (Kleinberg and Weinberg)
In computer science terms, inequality \eqref{eq:prophet_inequality_basic} compares the performance of online algorithm (the gambler) and offline optimal solution (the prophet) in the competitive analysis framework. The inequality \eqref{eq:prophet_inequality_basic} became one of many results in the Bayesian settings where the algorithm has to select a feasible subset of elements from a random sequence of samples with the knowledge about the distribution from which this sequence is generated and where the value of the selected set is compared to the offline optimum solution with the complete information about the sequence of samples. For example, a general result of this type is due to Kleinberg and Weinberg~\cite{KW12}, who gave a $4k-2$ competitive algorithm for the Bayesian selection problem where feasibility constraint on the element sets that can be selected is given by the intersection of $k$ matroids.

%Applications: prophet inequalities for revenue and welfare maximization: 
 %-sequential posted pricing (revenue maximization): Chawla et al.
 %-Combinatorial auctions (welfare): Feldman et al.
 %-ad words problem: buyers arrive online Alai et al.,
The interest in prophet inequalities as a tool for the design and analysis of algorithms has been largely driven by important applications in algorithmic mechanism design. In particular, the work in Bayesian mechanism design showed a remarkable power of simple sequential posted price mechanisms (SPM) in a variety of allocation problems such as the settings with unit-demand bidders~\cite{Chawla10},\cite{Alaei14} and combinatorial auctions~\cite{FeldmanGL15},\cite{DuettingFKL17}. An underlying environment for all these applications is matching in bipartite graphs.

%Matching: edge arrivals.
\paragraph{Matching.}  In this paper, we adopt the general stochastic model of Kleinberg and Weinberg~\cite{KW12} to study an important basic setting of Bayesian (weighted) matching. Specifically, we assume that {\em edges} of a bipartite graph are the elements of our online selection problem which arrive in an arbitrary order with random values (often called weights) sampled independently from different probability distributions on $\R_{+}$. An online algorithm knows the distribution of each edge's value but could only select a subset of edges that must form a matching. Upon arrival of every new edge the algorithm observes edge's value and must immediately decide whether to add this edge if possible to the current matching or not. The algorithm's objective is to maximize the total value of the selected edges. 

This model is more general than the matching (unit-demand) models or sequential posted prices from the mechanism design literature. Indeed, the previous work assumes that one part of the bipartite graph is already present and when a new {\em vertex} of the other side arrives all the values of its incident edges are revealed to the algorithm\footnote{To be fair, the independence assumption (for all edges) of Kleinberg and Weinberg~\cite{KW12} is a bit stronger than in some mechanism design models with vertex arrivals. However, without this assumption, it is impossible to recover constant fraction of the prophet's value if the algorithm only sees only one edge at a time.}. On the practical side, the model with edge arrivals (as opposed to vertex arrivals) gives much finer control over preferences of the agents which may dynamically evolve over time. 
On the negative side, the online policy proposed in~\cite{KW12} is not as simple as in~\cite{Cahn84} and is highly adaptive\footnote{Although, adaptivity does not preclude Kleinberg and Weinberg's result to translate into a posted price mechanism for unit-demand multi-parameter bidders who report all their preferences at once, it makes it hard to apply to dynamic settings where valuations of the bidders may dynamically evolve over time.} to the online information revealed to the algorithm, i.e., the value thresholds for accepting the elements are constantly changing throughout the algorithm's execution.

%Robust and simple algorithms
 %-Important in strategic setting. 
 %-Simple threshold policies: non adaptive and uniform.
 %-Discuss Kleinberg and Weinberg's algorithm: adaptive solution, not robust. 
%\paragraph{Non adaptive threshold policy.} 
\paragraph{Non adaptivity and simplicity.} Indeed, all application of prophet inequality in mechanism design starting with the work of Hajiaghayi~\cite{HajiaghayiKS07} and Chawla et al.~\cite{Chawla10} rely on non adaptive versions of prophet inequality rather than the comparison between optimal online and offline solutions. E.g., \cite{Chawla10} refers to the result of Samuel-Cahn~\cite{Cahn84} that uses simple and robust constant threshold policy $c$ but not the earlier version of Krengel and Suncheon~\cite{krengelS77} which compares the optimal online and offline policies, while Feldman et al.~\cite{FeldmanGL15} use static item prices that never change throughout the execution of their algorithm and still obtain tight approximation guarantee of $2$ in Bayesian combinatorial auctions. Therefore, it is much desirable to have prophet inequalities with {\em non adaptive and simple} threshold policy.

%Therefore, when one seeks to derive prophet inequality in a new setting, usually it is much desired to have {\em non adaptive and simple} threshold policy.

%All mechanism design applications starting with the work of Chawla et al.~\cite{Chawla10} rely on robust (non adaptive) versions of prophet inequality rather than the comparisons between optimal online and offline solutions. Indeed, \cite{Chawla10} refers to the result of Samuel-Cahn~\cite{Cahn84} that uses constant threshold policy $c$ and not the earlier version of Krengel and Suncheon~\cite{krengelS77} which compares the optimal online and offline policies, while Feldman et al.~\cite{FeldmanGL15} use static item prices that never change throughout the execution of their algorithm. Therefore, it is an important question to design  

\paragraph{Our results.} In this paper, we prove a {\em non adaptive} prophet inequality for the Bayesian bipartite matching setting. Our online algorithm belongs to a natural class of simple {\em vertex-additive} threshold policies: (i) vertices on either side of the graph receive threshold values $\Lv=(\Lvi[1],\ldots,\Lvi[n])$ for the left hand side and $\Rv=(\Rvj[1],\ldots,\Rvj[m])$ for the right hand side vertices (ii) the online algorithm accepts every edge $e=(i,j)$ that can be added to the current matching if $v(e)\ge \Lvi+\Rvj$. We find a vertex-additive policy with the expected payoff of at least one third of the prophet's payoff (the expected value of the optimal matching), i.e., we give a $3$ approximation guarantee. We give a complementary lower bound of $2.25$ that shows clear separation from the setting where vertices from one side arrive online and where the upper bound is $2$ (see, e.g.,~\cite{FeldmanGL15}).

%Technical contribution
%additive structure of thresholds- common idea in the literature.
The design of our algorithm and its analysis consist of two stages. First, we combine the ideas from~\cite{KW12} and~\cite{FeldmanGL15} of the additive threshold structure and their accounting of online algorithm expected payoff with respect to the offline optimum. Generally, we follow the plan of~\cite{KW12} but instead of the element's (edges) contribution to the optimum we work in a ``dual space'' of vertices similar to item prices in~\cite{FeldmanGL15}, but applied to the both sides of the bipartite graph. Surprisingly, our non adaptive pricing scheme actually works in Bayesian matching setting and would immediately yield a simpler algorithm with the same performance guarantee of $6$\footnote{In fact, one can give a slightly better approximation guarantee of $5.82$ as in \cite{KW12} for the intersection of $2$ matroids.} as in~\cite{KW12} for the intersection of two matroids. However, a simpler form of the thresholds allows us to improve the approximation guarantee.

In the second stage, we turn the optimization problem of vertex prices into a bilinear algebraic form instead of using existing {\em explicit solutions} for the additive thresholds from the prior literature. We further guess that certain vectors of parameters given {\em implicitly} as a solution to a semi-linear system of equations should yield significantly better approximation guarantee of $3$. We then present a gradient-decent type algorithm that quickly finds a solution to this system of equations. 

%First
%We combine the ideas from ~\cite{KW12}~\cite{FeldmanGL15} for the accounting of the payoff of the online algorithms and its comparison with the prophet. Generally, we follow the plan of ~\cite{KW12} but instead of the element's (edges) contribution to the optimum we work in a ``dual space'' of vertices similar to item prices in~\cite{FeldmanGL15}.
%
%Second, we turn the optimization problem of prices into a bilinear optimization problem and guess that certain vectors of parameters satisfying a semi-linear system of equations should yield good approximation. We then give a gradient-decent type algorithm that quickly finds the solution to this system of equations.  

\subsection{Related Work}
\label{sec:related}
%Initial work
Starting with the work of Krengel and Sucheston~\cite{krengelS77} there has been a lot of work studying different restrictions on stopping rules, distributions, independence assumption etc., which is too broad to discuss in this amount of space. We recommend~\cite{Hill_Kertz_survey92} for a detailed survey. The line of earlier work~\cite{kennedy85,kennedy87,Kertz86} on multiple-choice prophet inequalities, where the online algorithm and the prophet can choose more than one element, is particularly relevant to our paper.  
% and a survey~\cite{Lucier_survey} for more recent work in computer science and economics

%Prophet inequality and pricing
%work of Hajiaghayi et al.~\cite{HajiaghayiKS07}
In computer science literature, the research on prophet inequalities and their applications to posted price mechanisms was initiated by Hajiaghayi et al.~\cite{HajiaghayiKS07}. Prophet inequalities were obtained for a variety of multiple choice combinatorial settings: for matroid and intersection of matroids~\cite{KW12,AzarKW14,FeldmanSZ16}, polymatroids~\cite{DuttingK15}, the generalized assignment problem~\cite{AlaeiHL12,AlaeiHL13}, and general downward closed feasibility constraints \cite{Rubinstein16}. In algorithmic mechanism design, 
Chawla et al.~\cite{ChawlaHK07} developed approximately optimal in terms of revenue posted price mechanisms for unit-demand buyers. Other results on Bayesian auctions with the objectives of revenue and welfare maximization include \cite{Chawla10,Alaei14,Cohen-AddadEFF16} for unit-demand buyers, and~\cite{Alaei14,FeldmanGL15,DuettingFKL17,RubinsteinV17} for combinatorial auctions. A direct connection between pricing mechanisms and prophet inequalities is shown in~\cite{CorreaFPV19}. The success story of prophet inequality in mechanism design applications have instigated further studies of dynamic posted prices in other online optimization settings such as k-server~\cite{CohenEFJ15}, and makespan minimization for scheduling problems~\cite{FeldmanFR17}.
 %On the other hand, \cite{CorreaFHOV17} and \cite{AbolhassaniEEHK17} focus on the case of identical distributions or a large market assumption in the classic prophet inequality setting.

%Closest to our work are the papers by Kleinberg and Weinberg~\cite{KW12}.
%As bipartite matching can be thought of as the intersection of two matroid set systems, the result of~\cite{KW12} implies $6$-approximation guarantee\footnote{Kleinberg and Weinberg have a slightly better approximation guarantee of $5.82$ for the intersection of $2$ matroids.} for the Bayesian matching with edge arrivals.
%
%The additive structure of value thresholds is a common technique in the literature. 
%
%The additive structure of value thresholds is a common technique in the literature that already appeared in \cite{KW12} for the intersection of $p$ matroids, where the additive components of their thresholds are obtained per every arriving element (an edge) proportional to its expected marginal contribution to the optimal solution conditioned on the past choices of the algorithm. 
%in this sense is similar to \cite{KW12}. However,\cite{KW12} uses a different rule for each additive component
%additive component each matroid's threshold is proportional to the expected marginal contribution of the element to the optimal solution conditioned on the past choices of the algorithm.    

%Stochastic matching. 
%Online algorithms for stochastic matching, vertex arrival.
\paragraph{Online matching.} Our matching setting is closely related to online bipartite matching problem introduced by Karp et al.~\cite{KarpVV90} which is a central topic in the area of online algorithms with a wide range of applications. In this model one side of the bipartite graph is given in advance, while the vertices of the other side arrive online in an arbitrary order. The online algorithm observes all edges incident to the arriving vertex, however the algorithm does not have any prior information about distribution of the edges (typically in this line of work edges have only $0-1$ weights). Karp et al. gave tight $1.58$ approximation guarantee for the adversarial order of vertex arrivals. The result was 
%improved for random arrival model~\cite{KarandeMT11,MahdianY11} and 
extended to the settings with weighted vertices~\cite{AggarwalGKM11} and Adword problem~\cite{MehtaSVV07}.
The latter problem considers generalized matching where each edge has a weight and each vertex in the given side of the graph has a budget (capacity constraint). Earlier work~\cite{MehtaSVV07,BuchbinderJN07} on Adword problem focused on worst-case performance guarantees for the adversarial or random arrival order~\cite{GoelM08,DevenurH09}. More recent papers consider Bayesian setting in which distribution of weights (bids) usually are known in advance, e.g., \cite{FeldmanMMM09,MahdianY11,KarandeMT11,ManshadiGS11,HaeuplerMZ11,HaeuplerMZ11,MirrokniGZ12,DevanurJSW19}.
% Fully online + edge arrival
A few papers consider matching models closer to our edge arrival setting. E.g., McGregor~\cite{McGregor05} gave a $5.82$-competitive online algorithm for (weighted) edge arrivals with preemption, i.e., where the online algorithm can discard any previously matched edges. Later, improved approximation guarantees were obtained for randomized algorithms~\cite{EpsteinLSW13} and for special cases of growing tree and forests~\cite{ChiplunkarTV15,BuchbinderST17}.  
A very recent line of work \cite{KTWZZ18,Ashlagi_arxiv18,HPTTWZ19} considers fully online matching model in (unweighted) bipartite graphs where vertices of both sides arrive online revealing edges to all previously arrived vertices. In this model~\cite{HPTTWZ19} gave tight $1.76$-competitive algorithm.

Closest to our work is the result of Kleinberg and Weinberg~\cite{KW12} who obtained $4p-2$-approximation prophet inequality for the online Bayesian selection problem from the intersection of $p$ matroids. In comparison, our work can be seen as specialization of their result to an important setting of matching in bipartite graphs that corresponds to the intersection of $p=2$ matroids. In this setting, which is arguably considered to be central for the area of online algorithms, we design much simpler non-adaptive policy with significantly better approximation guarantees than the guarantee of $6$ (5.82) from~\cite{KW12}. In terms of techniques, we first combine approaches from~\cite{KW12} and~\cite{FeldmanGL15} to derive a clean mathematical formulation, which we then efficiently tackle using novel techniques that may be useful in other stochastic settings.   

\section{Preliminaries}
\label{sec:prelim}
Let $G(L,R)$ be a bipartite multi-graph between the left and the right parts $L,R$ with $|L|=n$ and $|R|=m$ vertices. The graph has a set of multi-edges $E$ between $L$ and $R$ with different values $\vali[e]\in\R_{+}$ for each edge $e\in E$, where $\R_{+}$ denotes the set of non-negative real numbers. The edges arrive online in an unknown order $\sigma=(e(t))_{t=1}^{t=T}$, where edges are enumerated by their arrival time from $1$ to $T=|E|$. For a given  profile of values $\vals=\{\vali[e]\}_{e\in E}$, we let $\OPT(\vals)$ to denote the value of the maximal (offline) matching in the graph $G$ with edge values $\vals$.

%Bayesian problem.
\paragraph{Bayesian online selection problem.} We consider a Bayesian setting, where edge values are drawn independently from the distributions $\{\dist_{e}\}_{e\in E}$. Write $\dist = \prod_{e\in E}\dist_{e}$, so that joint valuation profile $\vals$ of all edges is drawn from the joint distribution $\vals\sim\dist$. Both the graph $G$ and the distribution $\dist$ are known in advance.
%Bayesian selection problem.
An input to the Bayesian online selection problem (BOSP) is a sequence $\sigma$ of pairs $(e(t),\vali[e(t)])_{t=1}^{t=T}$ revealed one by one from time $t=1$ to $t=T$. An {\em online selection algorithm} (also called {\em online policy}) $\alg$\footnote{The algorithm $\alg$ can be randomized, but it must be independent of the randomness (if any) in the choice of $\sigma$. Any randomized algorithm is a distribution over deterministic policies.} upon receiving a new piece of input $(e(t),\vali[e(t)])$ at time $t$ must irrevocably decide whether to take the edge $e(t)$ subject to the feasibility constraint that the selected  set of edges must form a matching. The goal of the online algorithm $\alg$ is to maximize the total value of the selected edges in expectation over $\vals\sim\dist$.

%Threshold policy
Every online policy at each time $\tau\in[T]$ may be described as a function of the past selection decisions (not the edge values), since the posterior distribution of values and feasibility constraint for the edges coming after time $\tau$ (for a fixed sequence of past decisions) do not depend on the realized values before time $\tau$. 
We use $\alg(\vals,\sigma)$ to denote the set of all edges accepted by an online selection algorithm $\alg$. When it is clear from the context, we sometimes will drop the dependency on the arrival order $\sigma$. 
%We use $\alg_{\tau}(\vals,\sigma)$ to denote the set of edges accepted by an online selection algorithm $\alg$ until time $\tau$ (including $\tau$). When it is clear from the context, we sometimes will drop the dependency on the arrival order $\sigma$ and also use notation $\alg(\vals)\eqdef\alg_{\le T}(\vals,\sigma)$ for the set of the selected edges. 
Without loss of generality, one may restrict attention to {\em monotone} selection policies, i.e., policies that for any given history before time $\tau$ select the edge $e(\tau)$ with higher probability for larger values of $\vali[e(\tau)]$. It means that any monotone {\em deterministic} selection policy can be described by a sequence of thresholds $(\delta_\tau(\sigma,\vals))_{\tau=1}^{\tau=T}$, where each threshold $\delta_\tau$ only depends on the prior history $(\vali[e(t)])_{t=1}^{\tau-1}$, such that edge $e(\tau)$ is accepted if and only if $\vali[e(\tau)]\ge\delta_\tau$ and neither of vertices $i,j$ incident to $e(\tau)=(i,j)$ was covered by previously selected edges.

%Approximation. Fixed-order adversary. 
\paragraph{Approximation guarantees.} The algorithm $\alg$ is agnostic to the order of edge arrivals, that is $\alg$ should perform well regardless of the arrival order of the edges. We study performance of $\alg$ in the worst-case for a {\em fixed-order} adversary which is the standard assumption in the prior work on BOSP. Formally, we assume that an adversary selects the order $\sigma$ of edge arrivals (or distribution of different orders) that does not change with the choices of algorithm and/or realized edge values. As the optimal solution is usually quite complex even in the most basic settings, the performance of the online policy $\alg$ is compared against a stronger benchmark of the expected {\em prophet}'s performance who knows all the realized values $\vals$ in advance before selecting any edges, i.e., the benchmark is the offline optimum solution. {\em Prophet inequality} refers to the approximation guarantee of $\rho$ that online algorithm can achieve compared to the prophet, i.e., for any arrival order $\sigma$  
\[
\rho\cdot \Exlong[\vals\sim\dist]{\sum_{e \in \alg(\vect{v},\sigma)}\vali[e]} \ge \Exlong[\vals\sim\dist]{\OPT(\vals)}.
\]

\section{Online Algorithm}
\label{sec:proof}
Our algorithm belongs to a natural class of simple non-adaptive threshold policies that we call vertex-additive policies. Any vertex-additive policy (denoted as $\VA$) is described by two positive real vectors $\Lv=(\Lvi[1],\ldots,\Lvi[n])$ and $\Rv=(\Rvj[1],\ldots,\Rvj[m])$ that accepts a new arriving edge $e(t)=(i,j)$ with value $\vali[e(t)]$ if and only if 
\begin{enumerate}
\item vertices $i,j$ are available, i.e., both $i \in L$, $j\in R$ are not covered by previously accepted edges; 
\item edge's value exceeds the sum of its vertex thresholds $\vali[e(t)]\ge \Lvi+\Rvj$. 
\end{enumerate} 
If at least one of the previous two conditions fails, then the edge $e(t)$ is rejected. %We use $\alg(\vals)$ to denote the set of edges accepted by the algorithm. 
The class of vertex-additive policies bears certain resemblance to the solution proposed in~\cite{KW12} for the intersection of $k$ matroids set systems, but has some important distinctions. 
%%In this part, we will show our new upper bound for matching prophet inequality.
%In the section ... we do ... .
%In the section ... we do ... .
%In the section ... we do ... .

\subsection{Analysis of vertex-additive policies}
%\nick{Welfare $=$ Revenue plus Surplus}
We assume that incoming edges $(e(t))_{t=1}^{t=T}$ of graph $G$ arrive in a fixed order $\sigma$, which is unknown to us. Similar to the previous literature on the prophet inequality in the strategic settings (e.g.~\cite{Chawla10,FeldmanGL15}), we treat the total value of the accepted edges as the social welfare that is comprised of two components: (i) the {\em revenue} part that is equal to the sum of accepted edges' thresholds, i.e., the total payment received by the algorithm if it takes from each accepted edge a price equal to its threshold; (ii) the {\em surplus} part, which is equal to the sum of extra values that every accepted edge contributes to the total value on top of its guaranteed threshold payment. In particular, we will use price terminology interchangeably with thresholds when referring to our vertex-additive policy. In the following we will analyze separately the revenue and surplus parts of any vertex-additive policy $\alg=\VA(\Lv,\Rv)$. To simplify notations we use $\alg(\vals)$ to denote the set of accepted edges for a particular run of $\VA(\Lv,\Rv)$ on the valuation profile $\vals$. 

%In the analysis, we will use a fixed coming order of the edges, which is $\pi$. All of the analysis is based on the fixed coming order.
%First, I will give some extra definition.
%\begin{itemize}
	%\item Here is a matrix $\M$: use \nick{use $M_{i,j}$ instead} $m_{i,j}$ is $\expectation{v}{v_{i,j}|(i,j) \in \OPT}$.
	%\item Here is an another matrix $\Q$: \nick{use $Q_{i,j}$ instead} $q_{i,j}$ is $\prob{\vev}{(i,j) \in \OPT}$
	%\item $\vect{\alpha}$ is a vector if the i-th node in $L$ set is taken then $\alpha_{i} = 1$ otherwise, $\alpha_{i} = 0$.
	%\item $\betav$ is a vector if the i-th node in $R$ set is taken then $\beta_{i} = 1$ otherwise, $\beta_{i} = 0$.
	%\item $\Ld_{i,i} = \vect{l}_{i}$ and $\Rd_{j,j} = \vect{r}_{j}$
%\end{itemize}
%Then, we try to give an another representation of prophet inequality problem. 
%We divided the welfare into revenue ($\rev$) and surplus ($\uti$) parts. 
%And we will handle these two parts separately.

%Revenue part
\paragraph{Revenue.} The important property of the vertex-additive prices is that we can conveniently attribute the expected revenue of our policy (denoted as $\rev$) to the average set of covered vertices instead of the set of accepted edges. We denote the final set of covered vertices of $\alg(\vals)$ for a given valuation profile $\vals$ as $\alloc(\vals)$ that consists of the sets $\GET$ vertices in the $L$ part of $G$ and $\SOLD$ vertices in the $R$ part of $G$. The expected revenue can be written as follows.
\begin{equation}
\label{eq:rev_basic}
	\rev = \Exlong[\vals]{\sum_{(i,j) \in \alg(\vals)} \InParentheses{\Lvi + \Rvj}}
	= \Exlong[\vals]{\sum_{i \in \GET} \Lvi + \sum_{j \in \SOLD} \Rvj}
\end{equation}

%\begin{align*}
	%\rev &= \expectation{\vev}{\sum_{(i,j) \in \alg(\vev)} \InParentheses{l_{i} + r_{j}}}\\
	%&= \expectation{\vev}{\sum_{i \in \GET} l_{i} + \sum_{j \in \SOLD} r_{j}}\\
%\end{align*}
\paragraph{Surplus.} We can give a lower bound on the expected surplus using the edges between {\em uncovered} vertices of $\VA(\Lv,\Rv)$. The expected surplus (denoted as $\uti$) by definition is equal to 
\begin{equation}
\label{eq:surplus_def}
\uti\eqdef\Exlong[\vals]{\sum_{\substack{e(t)\in \alg(\vals)\\e(t)=(i,j)}} \pos{\vali[e(t)] - \Lvi - \Rvj}},
\end{equation} 
where $\pos{\vali[e(t)] - \Lvi - \Rvj}$ denotes $\max \{(\vali[e(t)] - \Lvi - \Rvj), 0\}$. Since our goal is to compare the surplus with the optimal matching selected by the prophet, we focus on the edges in the optimal matching $\OPT(\hvals)$, i.e., the optimal matching for another independently drawn valuation profile $\hvals\sim\dist$. Specifically, our lower bound will include all the edges in a random optimal matching $\OPT(\hvals)$ between uncovered vertices for the valuation profiles $\vals$. To deal with specific edges of the optimal matching $\OPT(\hvals)$ we employ indicator random variable $\edgeinopt[(i,j)]{\hvals}$ for the event that there is an edge $e(t)\in\OPT(\hvals)$ between vertices $i\in L, j\in R$ and random variable $\chiopt(\hvals)$:
\begin{equation}
\label{eq:def_chiopt}
\chiopt(\hvals)\eqdef
\begin{cases} 
\hval_{e(t)} & \text{if } \exists e(t)\in\OPT(\hvals) \text{ s.t. } e(t)=(i,j),\\
  0          & \text{if } (i,j)\notin\OPT(\hvals)
\end{cases}
~~
\edgeinopt[(i,j)]{\hvals}
\eqdef
\begin{cases} 
  1 & \text{if } (i,j)\in\OPT(\hvals),\\
  0 & \text{if } (i,j)\notin\OPT(\hvals).
\end{cases}
\end{equation}

\begin{lemma}
\label{lemma:surplus}
The expected surplus is at least
\begin{equation}
\label{eq:surplus_basic}
\uti \ge \Exlong[\vals]{\sum_{\substack{i \notin \GET\\j \notin \SOLD}} \pos{\Exlong[\hvals]{\chiopt(\hvals)} - (\Lvi+\Rvj) \Exlong[\hvals]{\edgeinopt[(i,j)]{\hvals}}}},
\end{equation}
where $\Ex[\hvals]{\edgeinopt[(i,j)]{\hvals}}$ can be also written as $\prob[\hvals]{(i,j) \in \OPT(\hvals)}$.
%\begin{equation}
%\label{eq:surplus_basic}
%\uti \ge \Exlong[\vals]{\sum_{\substack{i \notin \GET\\j \notin \SOLD}} \pos{\Exlong[\hvals]{\chiopt(\hvals)} - (\Lvi+\Rvj) \Prlong{(i,j) \in \OPT(\hvals)}}}.
%\end{equation}

%\begin{equation}
%\label{eq:surplus_basic}
%\uti \ge \Exlong[\vals]{\sum_{\substack{i \notin \GET\\j \notin \SOLD}} \pos{\Exlong[\hvals]{\hvali[i,j]\cdot\iopt[\hvals]} - (\Lvi+\Rvj) \Prlong{(i,j) \in \OPT(\hvals)}}}.
%\end{equation}
%Here we use the $\hvals$ to denote another sample from the joint probability distribution $\mathcal{F}$. And we use the indicator function $\iopt[\hvals]$ to denote whether the edge $(i,j)$ is taken by the prophet based on the valuation profile $\hvals$. We also use the $\OPT(\hvals)$ to denote the set of edges which are taken by the prophet based on the valuation profile $\hvals$.
\end{lemma}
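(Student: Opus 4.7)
The plan is to rewrite $\uti$ edge by edge and then compare each edge's contribution to the optimal matching of a fresh independent sample $\hvals\sim\dist$. First, I would observe that for every edge $e$ with endpoints $(i,j)$, the event $U_e$ that both $i$ and $j$ are still uncovered just before $e$ is processed depends only on the values of edges arriving earlier than $e$. By the product structure $\dist=\prod_{e'}\dist_{e'}$, the value $\vali[e]$ is drawn independently of all those earlier values, so $U_e$ and $\vali[e]$ are independent. Since $e$ is accepted precisely when $U_e$ holds and $\vali[e]\ge\Lvi+\Rvj$, and since $\indic[\vali[e]\ge\Lvi+\Rvj]\,(\vali[e]-\Lvi-\Rvj)=\pos{\vali[e]-\Lvi-\Rvj}$, this independence factorises
\begin{equation*}
\uti \;=\; \sum_{e\in E}\,\prob{U_e}\cdot\Ex{\pos{\vali[e]-\Lvi-\Rvj}}.
\end{equation*}

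Next, I would exploit the monotonicity that once a vertex is covered it stays covered: if $i\notin\GET$ and $j\notin\SOLD$ in the final matching, then in particular $U_e$ must have held at the moment $e$ was processed, so $\prob{U_e}\ge\prob{i\notin\GET,\,j\notin\SOLD}$. Grouping edges by their endpoint pair $(i,j)$ into multisets $E_{ij}$, this weakens the previous identity to
\begin{equation*}
\uti \;\ge\; \sum_{i,j}\prob{i\notin\GET,\,j\notin\SOLD}\sum_{e\in E_{ij}}\Ex{\pos{\vali[e]-\Lvi-\Rvj}}.
\end{equation*}

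Finally, for each $(i,j)$ I would show that the inner sum dominates $\pos{\Ex[\hvals]{\chiopt(\hvals)}-(\Lvi+\Rvj)\Ex[\hvals]{\edgeinopt{\hvals}}}$. Replacing $\vali[e]$ by the iid copy $\hval_e$ and inserting the factor $\indic[e\in\OPT(\hvals)]\le 1$ only decreases each summand; Jensen's inequality applied to the convex map $\pos{\cdot}$ then pulls $\Ex[\hvals]{}$ outside each term, and subadditivity $\pos{\sum_k x_k}\le\sum_k\pos{x_k}$ merges the summands across $e\in E_{ij}$. Because $\OPT(\hvals)$ is a matching, at most one edge of $E_{ij}$ is indicated, so the definitions of $\chiopt$ and $\edgeinopt$ identify the resulting expression with the claimed one. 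Reabsorbing $\prob{i\notin\GET,\,j\notin\SOLD}$ into an outer expectation over $\vals$ yields the stated inequality.

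The main obstacle is the first step: cleanly separating the acceptance decision on $e$ into a part depending only on earlier randomness (the event $U_e$) and a part depending only on $\vali[e]$ (the positive part of its surplus). Once this factorisation is set out, everything downstream is an orderly chain of Jensen, subadditivity, and monotonicity manipulations of the positive part.
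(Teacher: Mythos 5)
Your proposal is correct and follows essentially the same route as the paper's proof: the three pillars — (a) the independence between the ``both endpoints still free'' event and the current edge's value, (b) the monotonicity of the covered-vertex set over time so that $\prob{U_e}\ge\prob{i\notin\GET, j\notin\SOLD}$, and (c) restricting attention to the $\OPT(\hvals)$ edge of the pair $(i,j)$ followed by Jensen's inequality for $\pos{\cdot}$ — are exactly the ingredients the paper uses (phrased there via the $\edgefree(\vals,t)$ indicator, an iid swap to $\hvals$, and the replacement of $t$ by $T+1$). The only cosmetic differences are that you factor out $\prob{U_e}$ immediately rather than keeping the product inside an expectation, and you apply Jensen per-edge and then subadditivity of $\pos{\cdot}$, whereas the paper drops non-$\OPT$ edges first and applies Jensen once to $\chiopt(\hvals) - (\Lvi+\Rvj)\edgeinopt{\hvals}$; both orderings are valid and yield the identical bound.
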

%\begin{equation}
%\label{eq:surplus_basic}
%\uti \ge \Exlong[\vals]{\sum_{\substack{e(t)=(i,j):\\ i,j \notin \alloc(\vals)}} \pos{\Exlong[\hvals]{\hvali[e(t)]\cdot\edgeinopt{\hvals}} - (\Lvi+\Rvj) \Prlong{e(t) \in \OPT(\hvals)}}}.
%\end{equation}
\begin{proof} To analyze performance of a vertex-additive policy $\VA(\Lv,\Rv)$, we consider another indicator random variable $\edgefree(\vals, t)$ of $\vals\sim\dist$ that indicates whether vertices $i\in L, j\in R$ are free to take (not yet covered) on the valuation profile $\vals$ right before the arrival of the edge $e(t)$ at time $t$. The surplus of $\VA$ is given by~\eqref{eq:surplus_def} which we further rewrite and bound as follows
\begin{multline}
\label{eq:surplus_inequality_one}
\uti=\Exlong[\vals]{\sum_{\substack{i\in L\\ j\in R}}\quad \sum_{t: e(t)=(i,j)} \pos{\vali[e(t)] - \Lvi - \Rvj}\cdot\edgefree(\vals,t)}\\
=\Exlong[\vals,\hvals]{\sum_{i,j}\sum_{t: e(t)=(i,j)} \pos{\hvali[e(t)] - \Lvi - \Rvj}\cdot\edgefree(\vals,t)}\\
\ge\Exlong[\vals,\hvals]{\sum_{i,j}\sum_{t: e(t)=(i,j)} \pos{\hvali[e(t)] - \Lvi - \Rvj}\cdot\edgefree(\vals,T+1)},
\end{multline}
where the second equality holds, since the indicator function $\edgefree(\vals,t)$ does not depend on $\vali[e(t)]$ and thus is independent of the value of $\pos{\vali[e(t)] - \Lvi - \Rvj}$ which we substituted with another independent and identically distributed random variable $\pos{\hvali[e(t)] - \Lvi - \Rvj}$; and the inequality holds simply because $\edgefree(\vals,t)$ is a decreasing function in time $t$\footnote{Time $T+1$ denotes the time after arrival of the last edge.}. We note that the right hand side (RHS) of~\eqref{eq:surplus_inequality_one} can be directly related to the sets of covered vertices $\GET,\SOLD$.
\be
\label{eq:surplus_inequality_two}
\text{RHS}\eqref{eq:surplus_inequality_one}=\Exlong[\vals]{\sum_{\substack{i\notin\GET,\\ j\notin\SOLD}}~\sum_{t: e(t)=(i,j)}\Exlong[\hvals]{\pos{\hvali[e(t)] - \Lvi - \Rvj}}}
\ge\Exlong[\vals]{\sum_{\substack{i\notin\GET,\\ j\notin\SOLD}}\Exlong[\hvals]{\pos{\chiopt(\hvals) - \Lvi - \Rvj}}},
\ee
%\begin{multline}
%\label{eq:surplus_inequality_two}
%\uti\ge\text{RHS}\eqref{eq:surplus_inequality_one}=\Exlong[\vals]{\sum_{\substack{i\notin\GET,\\ j\notin\SOLD}}~\sum_{t: e(t)=(i,j)}\Exlong[\hvals]{\pos{\hvali[e(t)] - \Lvi - \Rvj}}}\\
%\ge\Exlong[\vals]{\sum_{\substack{i\notin\GET,\\ j\notin\SOLD}}\Exlong[\hvals]{\pos{\chiopt(\hvals) - \Lvi - \Rvj}}},
%\end{multline}
where to get the last inequality we ignored some non negative terms $\pos{\hvali[e(t)] - \Lvi - \Rvj}$ in the previous summation and counted only the edges between $i\notin\GET$ and $j\notin\SOLD$ that appear in the optimal matching $\OPT(\hvals)$. To conclude the proof we will use the following simple fact about function $\pos{\cdot}$.
\begin{fact}
\label{fact:pos} 
$\Ex{\pos{s}} \ge \pos{\Ex{s}}$ for any real random variables $s$.
%For any finite set $S$ of real numbers
%$\sum_{s \in S} \pos{s} \geq \pos{\sum_{s \in S} s}$.
\end{fact}
Before we continue with the lower bound of (RHS)~\eqref{eq:surplus_inequality_two}, we observe that $\pos{\chiopt(\hvals) - \Lvi - \Rvj}=\pos{\left(\chiopt(\hvals) - \Lvi - \Rvj\right)\cdot\edgeinopt{\hvals}}=\pos{\chiopt(\hvals) - \left(\Lvi + \Rvj\right)\cdot\edgeinopt{\hvals}}$ for any fixed vertices $i\in L$, $j\in R$, and valuation profile $\hvals$. Indeed, by definition $\chiopt(\hvals)=0$ and, therefore, $\pos{\chiopt(\hvals) - \Lvi - \Rvj}=0$ whenever indicator $\edgeinopt{\hvals}= 0$. Finally, we obtain the required lower bound on the (RHS) of \eqref{eq:surplus_inequality_two}.
 
\begin{multline*}
\uti\ge\text{RHS}\eqref{eq:surplus_inequality_two}=\Exlong[\vals]{\sum_{\substack{i\notin\GET,\\ j\notin\SOLD}}\Exlong[\hvals]{\pos{\left(\chiopt(\hvals) - \Lvi - \Rvj\right)\cdot\edgeinopt{\hvals}}}}\\
\ge\Exlong[\vals]{\sum_{\substack{i\notin\GET,\\ j\notin\SOLD}}\pos{\Exlong[\hvals]{\chiopt(\hvals)} - (\Lvi + \Rvj)\cdot\Exlong[\hvals]{\edgeinopt{\hvals}}}}, %\qedhere
\end{multline*}
where the last inequality follows from Fact~\ref{fact:pos} applied to random variable $s_1-s_2$, where $s_1\eqdef\chiopt(\hvals)$ and $s_2\eqdef\left(\Lvi + \Rvj\right)\cdot\edgeinopt{\hvals}$ so that $\Ex[\hvals]{\pos{s_1-s_2}}\ge\pos{\Ex[\hvals]{s_1-s_2}}=\pos{\Ex[\hvals]{s_1}-\Ex[\hvals]{s_2}}$.
%
%
%
%
%
%Then according to fact\ref{fact:pos}, we can get the following inequality.
%
%\begin{align*}
%\uti &\geq \Exlong[\vals]{\sum_{\substack{i \notin \GET\\j \notin \SOLD}} \Exlong[\hvals]{\pos{(\hvali[i,j] - \Lvi - \Rvj)\iopt[\hvals]}}}\\
%&\geq \Exlong[\vals]{\sum_{\substack{i \notin \GET\\j \notin \SOLD}} \pos{\Exlong[\hvals]{\hvali[i,j]\cdot\iopt[\hvals]} - (\Lvi+\Rvj) \Prlong{(i,j) \in \OPT(\hvals)}}}
%\end{align*}
%This finishes the proof.
\end{proof}

\subsection{Optimization of $\VA(\Lv,\Rv)$ parameters.}
The $\VA$ policy has two variable parameters $\Lv,\Rv$ representing the prices on the vertices in the left and right parts of $G$. One needs carefully choose these parameters to compete with the prophet. The standard approach in the prior work on prophet inequalities is to {\em explicitly} set the thresholds based on certain statistics of each element's marginal contribution to the optimal solution (usually a constant fraction of the expectation, or sometimes the median of its distribution). We take a different approach by casting our problem into a linear algebraic form and then {\em implicitly} choosing parameters $\Lv,\Rv$ that would maximize the bound in this formulation. In this section we derive this formulation and show how an implicitly described (via certain semi-linear equation) solution to this problem yields a $3$-approximation to the prophet's expected value. Furthermore, we give a gradient-decent type algorithm that efficiently calculates vectors $\Lv,\Rv$ up to any given precision error in the following section.
 
Our linear algebraic formulation is based on the lower bounds~\eqref{eq:rev_basic} and~\eqref{eq:surplus_basic} for the revenue and surplus terms of the vertex-additive policy $\VA(\Lv,\Rv)$. Namely, the expected value of the $\VA(\Lv,\Rv)$ is at least the sum of the right hand sides in \eqref{eq:rev_basic} and in~\eqref{eq:surplus_basic}
\begin{equation}
\label{eq:rev_plus_surplus_expressions}
\Exlong[\vals]{\VA(\Lv,\Rv)}\ge \Exlong[\vals]{\sum_{i \in \GET} \Lvi + \sum_{j \in \SOLD} \Rvj +\sum_{\substack{i \notin \GET\\j \notin \SOLD}} \pos{\Exlong[\hvals]{\chiopt(\hvals)} - (\Lvi+\Rvj) \Exlong[\hvals]{\edgeinopt[(i,j)]{\hvals}}}}.
\end{equation}
We note that RHS of~\eqref{eq:rev_plus_surplus_expressions} is calculated in expectation over all valuation profiles $\vals\sim\dist$, which only determines the sets $\GET$ and $\SOLD$. We further relax the bound in RHS of~\eqref{eq:rev_plus_surplus_expressions} by letting the sets $\GET,\SOLD$ to be selected by an adversary who wants to minimize our performance guarantee. Let the worst-case sets be $\SL\subseteq L$ for $\GET$ and $\SR\subseteq R$ for $\SOLD$. Hence,
\begin{equation}
\label{eq:min_set_expressions}
\Exlong[\vals]{\VA(\Lv,\Rv)}\ge \min_{\SL, \SR}\left(\sum_{i \in \SL} \Lvi + \sum_{j \in \SR} \Rvj +\sum_{\substack{i \notin \SL\\j \notin \SR}} \pos{\Exlong[\hvals]{\chiopt(\hvals)} - (\Lvi+\Rvj) \Exlong[\hvals]{\edgeinopt[(i,j)]{\hvals}}}\right).
\end{equation}
We rewrite RHS of~\eqref{eq:min_set_expressions} by introducing two $n\times m$ matrices $\M$ and $\Q$ that respectively comprise expected contributions and selection probabilities to the optimal matching of the edges between all pairs of vertices $i\in L$ and $j\in R$. Formally,

\begin{equation}
\label{eq:M_Q_definitions}
\M\eqdef
\begin{bmatrix}
	&\vdots&\\
	\cdots&\Exlong[\vals]{\chiopt(\vals)}&\cdots\\
	&\vdots&
	\end{bmatrix}_{i,j\in[n\times m]}
\quad
\Q\eqdef
\begin{bmatrix}
	&\vdots&\\
	\cdots&\Prlong[\vals]{(i,j) \in \OPT(\vals)}&\cdots\\
	&\vdots&
	\end{bmatrix}_{i,j\in[n\times m]}
\end{equation}
Thus lower bound~\eqref{eq:min_set_expressions} can be written as 
\begin{equation}\label{eq:lower_bound_M_Q}
\Exlong[\vals]{\VA(\Lv,\Rv)}\ge \min_{\SL, \SR}\left(\sum_{i \in \SL} \Lvi + \sum_{j \in \SR} \Rvj +\sum_{\substack{i \notin \SL\\j \notin \SR}} \pos{M_{i,j} - (\Lvi+\Rvj)\cdot Q_{i,j}}\right).
\end{equation}
On the other hand, the expected value of the optimal matching $\optval$ achieved by the prophet according to the definitions of matrix $\M$ and random variables $\chiopt(\hvals)$ (see~\eqref{eq:M_Q_definitions},\eqref{eq:def_chiopt}) is equal to
\begin{equation}
\label{eq:prophet_equality}
\optval=\onevLt\cdot\M\cdot\onevR, 
\end{equation}
where $\onevL,\onevR$ are $n$ and $m$ dimensional vectors with all coordinates equal to $1$.

Now RHS of~\eqref{eq:lower_bound_M_Q} can be rewritten using linear algebraic notations. To this end we change vectors $\Lv,\Rv$ to diagonal $n\times n$ and $m\times m$ matrices $\Ld, \Rd$, and represent sets  $\SL$, and $\SR$ respectively as vectors $\alphav\in\R^n$, $\betav\in\R^m$ with 
\begin{equation}
\label{eq:alpha_beta_definition}
\Ld\eqdef\text{diag}(\Lv),\quad
\Rd\eqdef\text{diag}(\Rv),\quad
\alphav[i]\eqdef
\begin{cases} 
   1 & \text{if } i \in \SL, \\
   0 & \text{if } i \notin\SL 
\end{cases}
\quad\quad
\betav[j]\eqdef
\begin{cases} 
   1 & \text{if } j \in \SR, \\
   0 & \text{if } j \notin\SR. 
\end{cases}
\end{equation}
With these notations at hand, RHS of~\eqref{eq:lower_bound_M_Q} is equal to
\begin{equation}
\label{eq:lower_bound_linear_algebra}
\min_{\substack{\alphav\in\{0,1\}^n\\ \betav\in\{0,1\}^m}}
\left(\alphavt \cdot \Ld \cdot \onevL + \onevRt \cdot \Rd \cdot \betav + \trans{(\onevL - \alphav)} \cdot \pos{\M-\Ld\cdot\Q-\Q\cdot\Rd} \cdot (\onevR - \betav)
\right),
\end{equation}
where $\pos{\A}$ is a matrix operator that changes each $(i,j)$ entry of the matrix $\A\eqdef\M-\Ld\cdot\Q-\Q\cdot\Rd \label{eq:A_def}$ to $\pos{\A[i,j]}=\max\{0,\A[i,j]\}$. Our optimization problem is to find vertex prices $\Lv,\Rv$ so as to maximize expression~\eqref{eq:lower_bound_linear_algebra} with diagonal matrices $\Ld=\text{diag}(\Lv),\Rd=\text{diag}(\Rv)$. Now we can state our main approximation guarantee.

\begin{theorem}
\label{th:one_third_prophet}
There are vertex prices $\Lv,\Rv$ such that vertex-additive policy $\VA(\Lv,\Rv)\ge\frac{\optval}{3}$.
%There are non negative prices $\Lv,\Rv$, s.t. $\eqref{eq:lower_bound_linear_algebra}\ge\frac{\optval}{3}$
%for $\Ld=\text{diag}(\Lv),\Rd=\text{diag}(\Rv)$.
\begin{proof} We will find non negative price vectors $\Lv,\Rv$, s.t. $\eqref{eq:lower_bound_linear_algebra}\ge\frac{\optval}{3}$ for $\Ld=\text{diag}(\Lv),\Rd=\text{diag}(\Rv)$. Let $\Ap\eqdef\pos{\A}=\pos{\M-\Ld\cdot\Q-\Q\cdot\Rd}$. We view the linear algebraic expression under minimization in~\eqref{eq:lower_bound_linear_algebra} as a bilinear function of $\alphav,\betav$ and rewrite it as a separate sum of constant, linear, and bilinear terms. 
\begin{multline}
\label{eq:remove_bilinear_term}
\eqref{eq:lower_bound_linear_algebra} = \min_{\alphav, \betav}\left(\onevLt \cdot \Ap \cdot \onevR -\alphavt \cdot \left[\Ap \cdot \onevR-\Ld \cdot \onevL\vphantom{\onevLt}\right] - \left[\onevLt \cdot \Ap-\onevRt \cdot \Rd\right] \cdot \betav + \alphavt \cdot \Ap \cdot \betav\right)\\
\ge \min_{\alphav, \betav}\left(\onevLt \cdot \Ap \cdot \onevR -\alphavt \cdot \left[\vphantom{\onevLt}\Ap \cdot \onevR-\Ld \cdot \onevL\right] - \left[\onevLt \cdot \Ap-\onevRt \cdot \Rd\right] \cdot \betav\right),
\end{multline}
where the last inequality holds, since $\alphavt \cdot \Ap \cdot \betav\ge 0$ for any choice of $\alphav, \betav$ vectors (recall that all entries in the matrix $\Ap$ and vectors $\alphav,\betav$ are non negative). The latter relaxation~\eqref{eq:remove_bilinear_term} of the bound \eqref{eq:lower_bound_linear_algebra} allows us to simplify the problem so that it can be solved optimally. It turns out that the optimal solution that maximizes RHS of~\eqref{eq:remove_bilinear_term} satisfies the following system of semi-linear equations (it is not linear, since matrix $\Ap$ has a non linear dependency, e.g., on $\Ld$ due to $\pos{\cdot}$ operator).
\be
\label{eq:2}
\begin{cases}
\Ap \cdot \onevR=\Ld \cdot \onevL\\
\onevLt \cdot \Ap=\onevRt \cdot \Rd.
\end{cases}
\ee
We note that if~\eqref{eq:2} holds, then RHS of~\eqref{eq:remove_bilinear_term} is independent of $\alphav,\betav$ and is equal to  $\onevLt \cdot \Ap \cdot \onevR$.
For the remainder of the Theorem~\ref{th:one_third_prophet}'s proof, we shall assume that the system of equations~\eqref{eq:2} has a solution and prove the required bound on the expected value of the prophet, $\optval$. We give an algorithmic proof for the existence of the solution to~\eqref{eq:2} in the next section\footnote{We do no provide a proof for the optimality of $\Ld,\Rd$ that satisfy~\eqref{eq:2}, as it is not needed for any of our results.}. 

We observe that by the definition of matrix $\Q$ as the probability matrix for the optimal matching to have each edge $(i,j)$ (see~\eqref{eq:M_Q_definitions}) the following vector inequalities in Fact~\ref{fact:1} are true.
\begin{fact}
\label{fact:1}
$\onevL \succeq \Q \cdot \onevR$ and $\onevRt \succeq \onevLt \cdot \Q$, where $\succeq$ stands for coordinate-wise inequality $\ge$.
\end{fact}
Finally, we show that expression~\eqref{eq:lower_bound_linear_algebra} is at least $\frac{1}{3}\optval$ of the prophet's expected value.
\begin{multline*}
3\cdot\eqref{eq:lower_bound_linear_algebra}\ge 3\cdot\onevLt \cdot \Ap \cdot \onevR =
\onevLt \cdot \Ap \cdot \onevR + \onevLt \cdot \Ld \cdot \onevL + \onevRt \cdot \Rd \cdot \onevR
\\
\ge \onevLt \cdot \Ap \cdot \onevR + \onevLt \cdot \Ld \cdot \left(\vphantom{\onevLt}\Q \cdot \onevR\right) + \left(\onevLt \cdot \Q\right) \cdot \Rd \cdot \onevR \\   %&(\text{Fact}~\ref{fact:1})
= \onevLt \cdot (\Ap+\Ld \cdot \Q+\Q \cdot \Rd) \cdot \onevR \ge \onevLt \cdot \M \cdot \onevR = \optval,
\end{multline*}
where the first inequality and the first equality directly follow from~\eqref{eq:2} and~\eqref{eq:remove_bilinear_term}; to get the second inequality we observe that all entries in $\Ld,\Rd,\onevR$, and $\onevL$ are non negative and also use the Fact~\ref{fact:1}; to get the last inequality, we notice that matrices $\Ap,\Ld,\Rd,\Q$ and vectors $\onevR,\onevL$ are comprised of non negative entries, and observe that since $\Ap=\pos{\M-\Ld \cdot \Q-\Q \cdot \Rd}$, $n\times m$ matrix $(\Ap+\Ld \cdot \Q+\Q \cdot \Rd)$ dominates $n\times m$ matrix $\M$ in every entry; the last equality is~\eqref{eq:prophet_equality}. 
%\begin{align*}
%3\cdot\eqref{eq:lower_bound_linear_algebra}
%&\geq 3\cdot\onevLt \cdot \Ap \cdot \onevR &\\
%&\geq \onevLt \cdot \Ap \cdot \onevR + \onevLt \cdot \Ld \cdot \Q \cdot \onevR + \onevLt \cdot \Q \cdot \Rd \cdot \onevR &(\text{Fact}~\ref{fact:1})\\
%&= \onevLt \cdot (\Ap+\Ld \cdot \Q+\Q \cdot \Rd) \cdot \onevR &\\
%&\geq \onevLt \cdot \M \cdot \onevR = \frac{1}{3} \textbf{\OPT}\tag{*}\label{eq:op_1}
%\end{align*}
\end{proof}
\end{theorem}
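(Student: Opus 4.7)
The plan is to combine the bound \eqref{eq:lower_bound_linear_algebra} with a clever choice of $\Ld, \Rd$ that makes the adversarial minimization over $\alphav, \betav$ collapse to a single constant. Expanding the product $(\onevL - \alphav)^\top \Ap (\onevR - \betav)$ gives a constant piece $\onevLt \Ap \onevR$, two linear pieces $-\alphavt[\Ap\onevR - \Ld\onevL]$ and $-[\onevLt\Ap - \onevRt\Rd]\betav$, and a bilinear cross term $\alphavt \Ap \betav$. Since $\Ap$ has non-negative entries and $\alphav, \betav \in \{0,1\}^*$, the bilinear cross term is non-negative, so dropping it is a valid lower bound. The remaining expression is then linear in $\alphav$ and $\betav$ separately, and I would choose $\Ld, \Rd$ so that both linear coefficients vanish, i.e. so that $\Ap \onevR = \Ld \onevL$ and $\onevLt \Ap = \onevRt \Rd$. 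Under this choice, the min over $\alphav, \betav$ is achieved identically at the constant $\onevLt \Ap \onevR$, independently of the adversary.

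Assuming such $\Ld, \Rd$ exist, I would then verify the $3$-approximation by writing
\[
3\,\onevLt \Ap \onevR \;=\; \onevLt \Ap \onevR \;+\; \onevLt \Ld \onevL \;+\; \onevRt \Rd \onevR,
\]
where the second and third terms come from right-multiplying the first equation by $\onevL$ and left-multiplying the second by $\onevR$ (and using that $\Ld, \Rd$ are diagonal). Since every row of $\Q$ sums to at most $1$ and every column sums to at most $1$ (edges in an optimal matching cover each vertex with probability $\le 1$), we have $\onevL \succeq \Q \onevR$ and $\onevRt \succeq \onevLt \Q$ coordinate-wise. Substituting these into the revenue terms and grouping,
\[
3\,\onevLt \Ap \onevR \;\ge\; \onevLt \bigl(\Ap + \Ld \Q + \Q \Rd\bigr) \onevR.
\]
Finally, by definition $\Ap = [\M - \Ld\Q - \Q\Rd]^+$, so in every entry $\Ap + \Ld\Q + \Q\Rd \succeq \M$, and the right-hand side is at least $\onevLt \M \onevR = \optval$ by \eqref{eq:prophet_equality}.

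The main obstacle I anticipate is establishing existence of a solution to the system
\[
\Ap \cdot \onevR = \Ld \cdot \onevL, \qquad \onevLt \cdot \Ap = \onevRt \cdot \Rd,
\]
which is genuinely non-linear in $\Ld, \Rd$ because the clipping operator $[\cdot]^+$ inside $\Ap$ depends discontinuously on the sign pattern of $\M - \Ld\Q - \Q\Rd$. I would defer this to a separate constructive/algorithmic argument (a fixed-point or gradient-descent type iteration that monotonically adjusts $\Lv, \Rv$ until both equations hold) and here prove only the conditional statement that any such solution yields a $\frac{1}{3}$-prophet guarantee. Notably, no optimality claim about $\Ld, \Rd$ is required for the bound itself; feasibility of the semi-linear system is all that the approximation argument consumes.
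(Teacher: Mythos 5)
Your proposal is correct and follows essentially the same route as the paper's own proof: expand the bilinear form, drop the non-negative cross term, choose $\Ld,\Rd$ to annihilate the linear coefficients via the semi-linear system, and then verify the factor-$3$ bound using row/column sums of $\Q$ and the entrywise domination $\Ap+\Ld\Q+\Q\Rd\succeq\M$, deferring existence of the solution to the gradient-descent argument. You also correctly observe, as the paper notes in a footnote, that only feasibility (not optimality) of $\Ld,\Rd$ is needed for the approximation bound.
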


\subsection{Algorithmic solution to the system of semi-linear equations~\eqref{eq:2}}
In this section, we describe a gradient decent type algorithm that solves system of equations~\eqref{eq:2} up to an arbitrary precision error $\eps$ in polynomial in $n,m$, and $\log(\eps^{-1})$ time. We start by rewriting~\eqref{eq:2} as a system of equations in $\Lv=(\Lvi[1],\ldots,\Lvi[n])$ and $\Rv=(\Rvj[1],\ldots,\Rvj[m])$ variables. 
\be
\label{eq:3}
\begin{cases}
l_{i} = \sum\limits_{j=1}^{m} \pos{M_{i,j} - Q_{i,j}(l_{i}+r_{j})} &\forall i\in[n],\\
r_{j} = \sum\limits_{i=1}^{n} \pos{M_{i,j} - Q_{i,j}(l_{i}+r_{j})} &\forall j\in[m].
\end{cases}
\ee
Our algorithm iteratively updates the solution of $\Lv,\Rv$ to~\eqref{eq:3} reducing the total additive approximation error in every such iteration. Specifically, we look at each separate equation in~\eqref{eq:3} and write their approximation errors in two vectors $\deltaL\in\R^n,\deltaR\in\R^m$. 
\begin{equation}
\label{eq:delta_def}
\deltaL[i]\eqdef l_{i} - \sum_{j=1}^{m} \pos{M_{i,j} - Q_{i,j}(l_{i}+r_{j})}\quad\quad
\deltaR[j]\eqdef r_{j} - \sum_{i=1}^{n} \pos{M_{i,j} - Q_{i,j}(l_{i}+r_{j})}.
\end{equation}
The update rule for $\Lv,\Rv$ in our iterative procedure is quite simple: we choose one of $\Lv$ or $\Rv$ vectors with the larger $L_1$-norm error in \eqref{eq:delta_def} and then correct this vector by subtracting half of its error vector. The description of our update rule for $\Lv,\Rv$ is summarized in Algorithm~\ref{alg:1} below. The initialization of $\Lv$ and $\Rv$ vectors can be arbitrary, e.g., one can set both $\Lv,\Rv$ to be $0$-vectors.

\begin{algorithm}[H]
\KwIn{Matrices $(M_{i,j},Q_{i,j})_{i,j\in [n]\times [m]}$ of non negative real numbers, error $\eps>0$.}
\KwOut{Vectors $\Lv,\Rv$ solving~\eqref{eq:3} up to total additive error $\eps$.}
\Repeat{$\onenorm{\deltaL} +\onenorm{\deltaR} \le \eps$}
{
\If{$\onenorm{\deltaL} \ge \onenorm{\deltaR}$}{update vector $\Lv \gets \left(\Lv - \frac{1}{2}\cdot\deltaL\right)$}
\Else{update vector $\Rv \gets \left(\Rv - \frac{1}{2}\cdot\deltaR\right)$}
Update $\deltaL,\deltaR$ according to \eqref{eq:delta_def} with new $\Lv,\Rv$.
}
\caption{Iterative update algorithm solving system of equations~\eqref{eq:3}.}
\label{alg:1}
\end{algorithm}

%we consider $L_1$-norm of our solution error $\Delta(\Lv,\Rv)$ and iteratively update $\Lv,\Rv$ reduce it. Let $\delta_{L}^{i}$ denote $l_{i} - \sum_{j} \pos{M_{i,j} - Q_{i,j}(l_{i}+r_{j})}$ and  $\delta_{R}^{j}$ denote $r_{j} - \sum_{i} \pos{M_{i,j} - Q_{i,j}(l_{i}+r_{j})}$. Then $\Delta = \sum_{i} \abs{\delta_{L}^{i}} +\sum_{j} \abs{\delta_{R}^{j}}$.

\begin{theorem}
\label{th:algorithm}
Algorithm~\ref{alg:1} terminates in $\Ocomplex{\log(\eps^{-1})+\log\left(\sum_{i,j} M_{i,j}\right)}$ many steps. Moreover, the obtained solution is within $\Ocomplex{\eps}$ $L_1$-distance to an exact solution of~\eqref{eq:3}.
\end{theorem}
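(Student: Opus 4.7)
The plan is to track the potential $\Phi_t = \onenorm{\deltaL} + \onenorm{\deltaR}$ along the iterates of Algorithm~\ref{alg:1} and to show that each iteration contracts $\Phi$ by a factor of $\tfrac34$. The crucial ingredient is Fact~\ref{fact:1}: the matching marginals $\sum_j Q_{i,j}\le 1$ and $\sum_i Q_{i,j}\le 1$ control how much each $\pos{M_{i,j}-Q_{i,j}(l_i+r_j)}$ can vary when a single coordinate of $\Lv$ (resp.\ $\Rv$) is perturbed, and this is precisely what makes the damped fixed-point iteration a contraction.

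To analyze one update, assume without loss of generality that $\onenorm{\deltaL}\ge\onenorm{\deltaR}$, so the algorithm performs $\Lv \gets \Lv - \tfrac12\deltaL$. Write $a_{i,j}$ and $a'_{i,j}$ for the values of $\pos{M_{i,j}-Q_{i,j}(l_i+r_j)}$ before and after the update, and set $b_{i,j}=a'_{i,j}-a_{i,j}$. Because $\pos{\cdot}$ is monotone nondecreasing and $1$-Lipschitz, $b_{i,j}$ inherits the sign of $\deltaL[i]$ and satisfies $|b_{i,j}|\le \tfrac12 Q_{i,j}|\deltaL[i]|$; in particular $\bigl|\sum_j b_{i,j}\bigr|\le \tfrac12|\deltaL[i]|$ by Fact~\ref{fact:1}. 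A direct computation then gives the clean identity
\[
|\deltaL'[i]| \;=\; \tfrac12|\deltaL[i]| \;-\; \sum_j |b_{i,j}|,
\]
while the coordinate-wise bound $|\deltaR'[j]-\deltaR[j]| \le \sum_i |b_{i,j}|$ yields $\onenorm{\deltaR'}\le\onenorm{\deltaR}+\sum_{i,j}|b_{i,j}|$. Combining these two estimates the $\sum_{i,j}|b_{i,j}|$ terms cancel, and we get
\[
\Phi_{t+1} \;\le\; \tfrac12\onenorm{\deltaL}+\onenorm{\deltaR} \;\le\; \tfrac34\,\Phi_t,
\]
the last inequality using $\onenorm{\deltaL}\ge \tfrac12\Phi_t$.

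Iterating from the initialization $\Lv=\Rv=\vect{0}$ (where $\Phi_0=2\sum_{i,j} M_{i,j}$), after $T=\Ocomplex{\log(\eps^{-1})+\log(\sum_{i,j} M_{i,j})}$ steps one has $\Phi_T\le\eps$, establishing the iteration count bound. For the $L_1$-distance claim, imagine running Algorithm~\ref{alg:1} indefinitely beyond step $T$: each subsequent update moves $(\Lv,\Rv)$ by at most $\tfrac12\Phi_t$ in the $L_1$-norm, and since $\Phi_t$ decays geometrically with ratio $\tfrac34$, the tail of future movements is bounded by $\sum_{t\ge T}\tfrac12(3/4)^{t-T}\eps \le 2\eps$. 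Hence the iterates form a Cauchy sequence whose limit is an exact solution of~\eqref{eq:3} within $L_1$-distance $O(\eps)$ of the algorithm's output.

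The main obstacle is establishing the sharp equality $|\deltaL'[i]|=\tfrac12|\deltaL[i]|-\sum_j|b_{i,j}|$, rather than a looser inequality: it is this exact form that lets the possible $\sum_{i,j}|b_{i,j}|$ increase on the $\Rv$-side cancel the $\sum_{i,j}|b_{i,j}|$ decrease on the $\Lv$-side, leaving only the clean $\tfrac12\onenorm{\deltaL}$ contraction. The sign-preservation step that underpins this identity (so that $\deltaL'[i]$ and $\deltaL[i]$ have the same sign, and no absolute values obstruct the cancellation) is exactly where the matching marginal bound from Fact~\ref{fact:1} is used.
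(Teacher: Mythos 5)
Your proof is correct and follows essentially the same route as the paper's: the same $\tfrac34$-contraction of the combined $L_1$ error, established via the same sign-preservation and $1$-Lipschitz properties of $\pos{\cdot}$ together with the row/column marginal bound $\sum_j Q_{i,j}\le 1$, leading to the same exact identity $|\deltaLh[i]|=\tfrac12|\deltaL[i]|-\sum_j|d_{i,j}|$ that makes the cross terms cancel, and the same Cauchy-sequence argument for the $\Ocomplex{\eps}$ distance to a true fixed point.
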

%\begin{algorithm}
%\Repeat
%{$\Delta \leq \epsilon$}
%{
%\If{$\sum_{i} \abs{\delta_{L}^{i}} \geq \frac{1}{2}\Delta$}{$\forall i$ let $l_{i} = l_{i} - \frac{1}{2} \delta_{L}^{i}$
%\Else{$\forall j$ let $r_{i} = r_{i} - \frac{1}{2} \delta_{R}^{j}$}}
%Compute all $\delta_{L}^{i}$ and $\delta_{R}^{j}$, also $\Delta$.
%}
%\caption{Algorithm to solve the equation\eqref{eq:3}.}
%\label{alg:1}
%\end{algorithm}
%Then we can analyze the change of $\Delta$ after our every update step.
\begin{proof} First, we show that our algorithm decreases the total $L_1$-norm of the combined error $\onenorm{\deltaL} +\onenorm{\deltaR}$ in all iterations and thus converges to the solution of~\eqref{eq:3}. Specifically, we show a constant fraction decrement in $L_1$-norm of the total error in each iteration. 

Without loss of generality, let us assume that $\onenorm{\deltaL} \ge\onenorm{\deltaR}$ in a given iteration and thus we update $\Lv$ vector to a new value $\Lvh\eqdef\Lv-\frac{1}{2}\cdot\deltaL$ in this iteration with the respective new errors $\deltaLh(\Lvh,\Rv),\deltaRh(\Lvh,\Rv)$ given by formula~\eqref{eq:delta_def} for the updated $\Lv\gets\Lvh,\Rv\gets\Rv$ vectors. Let us bound first the decrease in the $L_1$-norm of the error $\onenorm{\deltaLh}$ compared to its previous value $\onenorm{\vphantom{\deltaLh}\deltaL}$.

% \nick{Use command} \verb+\vphantom{...}+ \nick{to align brackets and parenthesis in different parts of the formula. E.g. compare your expression to the following one below yours. The command creates a vertical space equals to the height of its argument, but its does not take any horizontal space.}
\begin{multline*}
\deltaLh[i]=\Lvhi - \sum_{j=1}^{m} \pos{\vphantom{\frac{1}{2}}M_{i,j} - Q_{i,j}\left(\Lvhi+r_{j}\right)}=
\Lvi -\frac{1}{2}\deltaL[i] - \sum_{j=1}^{m} \pos{M_{i,j} - Q_{i,j}\left(\Lvi-\frac{1}{2}\deltaL[i]+r_{j}\right)}\\
= \deltaL[i] - \frac{1}{2}\deltaL[i] - \sum_{j=1}^{m} \left(\pos{M_{i,j} - Q_{i,j}\left(\Lvi-\frac{1}{2}\deltaL[i]+r_{j}\right)} - \pos{M_{i,j} - Q_{i,j}\left(\vphantom{\frac{1}{2}}\Lvi+r_{j}\right)} \right),
%decrease_{i}=\abs{\delta^{new}_{i}} - \abs{\delta^{old}_{i}}
%decrease_{i} \ge \frac{1}{2}\abs{\delta_{i}}(1 + \sum_{j}(Q_{i,j}))
\end{multline*}
where to get the last equality we simply plugged in the definition of $\deltaL[i]$ from~\eqref{eq:delta_def}. We let 
\begin{equation}
\label{eq:def_d}
d_{i,j} \eqdef \pos{M_{i,j} - Q_{i,j}\left(\Lvi-\frac{1}{2}\deltaL[i]+r_{j}\right)} - \pos{M_{i,j} - Q_{i,j}\left(\vphantom{\frac{1}{2}}\Lvi+r_{j}\right)}
\end{equation}
Hence, $\deltaLh[i] = \deltaL[i] - \frac{1}{2}\deltaL[i] - \sum_{j=1}^{m} d_{i,j}$ for every $i\in[n]$.
%To get an upper bound on $\abs{\deltaLh[i]}$,
We notice that $d_{i,j}=\pos{A_{i,j}+\frac{1}{2} Q_{i,j}\cdot \deltaL[i]}-\pos{A_{i,j}},$ where $A_{i,j}\eqdef M_{i,j} - Q_{i,j}\left(\Lvi+r_{j}\right)$ as per our definition of matrix $\A$. Now, the function $\pos{\cdot}$ satisfies the following two simple properties: (i) $\pos{x+y}-\pos{x}$ has the same sign as $y$; (ii) $\abs{\pos{x+y}-\pos{x}} \le \abs{\vphantom{\pos{}}y}$ for any real numbers $x,y$.
%\begin{fact}
%\label{fact:d}
%For any real numbers $x,y$:
%\begin{enumerate}
	%\item $\pos{x+y}-\pos{x}$ has the same sign as $y$
	%\item $\abs{\pos{x+y}-\pos{x}} \le \abs{\vphantom{\pos{}}y}$
%\end{enumerate} 
%\end{fact}
%Notice that 
%\begin{equation*}
%d_{i,j} = \pos{M_{i,j} - Q_{i,j}\left(\Lvi+r_{j}\right) +\frac{1}{2} Q_{i,j}\cdot \deltaL[i]} - \pos{\vphantom{\frac{1}{2}}M_{i,j} - Q_{i,j}\left(\Lvi+r_{j}\right)}
%\end{equation*}
Thus all $d_{i,j}$ for every $j\in[m]$ have the same sign as $\deltaL[i]$ (property (i) for $A_{i,j}$ and $\frac{1}{2}Q_{i,j}\cdot\deltaL[i]$). Moreover, by the second property
\[
\abs{\sum_{j=1}^{m} d_{i,j}}=\sum_{j=1}^{m}\abs{\vphantom{\frac{1}{2}}d_{i,j}}\le \sum_{j=1}^{m}\abs{\frac{1}{2}Q_{i,j}\cdot\deltaL[i]}\le 
\frac{1}{2}\abs{\vphantom{\frac{1}{2}}\deltaL[i]},
\]
where the last inequality holds since $\sum_{j}Q_{i,j}\le 1$ as the sum of the row entries in the matching probability matrix $\Q$. Therefore, we have
\begin{equation*}
\abs{\deltaLh[i]}=\abs{\deltaL[i] - \frac{1}{2}\deltaL[i] - \sum_{j=1}^{m} d_{i,j}} = \abs{\vphantom{\deltaLh}\deltaL[i]} - \left(\frac{1}{2}\abs{\vphantom{\deltaLh}\deltaL[i]} + \sum_{j=1}^{m} \abs{\vphantom{\deltaLh}d_{i,j}}\right).
\end{equation*}
We obtain a bound on the decrease of $L_1$-norm of $\deltaLh$ by adding previous equation over all $i\in[n]$.
\begin{equation}
\label{eq:deltaL}
\onenorm{\deltaLh} - \onenorm{\vphantom{\deltaLh}\deltaL} = \sum_{i=1}^{n} \left(\abs{\deltaLh[i]} - \abs{\vphantom{\deltaLh}\deltaL[i]}\right)= - \sum_{i=1}^{n} \frac{1}{2}\abs{\vphantom{\deltaLh}\deltaL[i]} -\sum_{i=1}^{n}\sum_{j=1}^{m}\abs{\vphantom{\deltaLh}d_{i,j}}.
\end{equation}

Next we bound the increase in the $L_1$-norm of error $\deltaRh$ compared to $\onenorm{\vphantom{\deltaRh}\deltaR}$. We note that
\begin{multline*}
\deltaRh[j] = r_{j} - \sum_{i=1}^{n} \pos{\vphantom{\frac{1}{2}}M_{i,j} - Q_{i,j}\left(\vphantom{\frac{1}{2}}\Lvhi+r_{j}\right)} = 
r_{j} - \sum_{i=1}^{n} \pos{M_{i,j} - Q_{i,j}\left(\Lvi - \frac{1}{2}\deltaL[i] +r_{j}\right)}\\
= \deltaR[j] - \sum_{i=1}^{n} \left(\pos{M_{i,j} - Q_{i,j}\left(\Lvi - \frac{1}{2}\deltaL[i] +r_{j}\right)} - \pos{\vphantom{\frac{1}{2}}M_{i,j} - Q_{i,j}\left(\vphantom{\frac{1}{2}}\Lvi+r_{j}\right)}\right),
\end{multline*}
% Here we can find that the last part of this equation is also $d_{i,j}$. Then take the absolute value on both sides of this equation \nick{this sentence up to here and the previous one are redundant.}
% and use the inequality $\abs{a-b} \leq \abs{a} +\abs{b}$ and $\abs{a+b} \leq \abs{a} +\abs{b}$, we can get the following result.
% \nick{You can write it as something like this: ``Thus we get an upper bound on $\abs{\deltaRh[j]}$ for each $j$ as follows.'' People are not interested to hear your literal reading of a formula that appears below. If you think that some explanations are needed, you can finish your formula with a comma and write next ``where first/second inequality follows from ...'' or a similar sentence.}
for every $j\in[m]$. Thus $\deltaRh[j]=\deltaR[j] - \sum_{i=1}^{n} d_{i,j}$ and we get 
\begin{equation}
\label{eq:deltaR}
\onenorm{\deltaRh}-\onenorm{\vphantom{\deltaRh}\deltaR}=\sum_{j=1}^{m}\left(\abs{\deltaRh[j]} - \abs{\vphantom{\deltaRh}\deltaR[j]}\right) = \sum_{j=1}^{m}\left(\abs{\deltaR[j] - \sum_{i=1}^{n} d_{i,j}} - \abs{\vphantom{\deltaRh}\deltaR[j]}\right) 
\le \sum_{j=1}^{m} \sum_{i=1}^{n} \abs{\vphantom{\deltaRh}d_{i,j}}.
\end{equation}
Finally, we combine bounds~\eqref{eq:deltaL} and~\eqref{eq:deltaR} to show constant factor drop in $L_1$-norm of errors $\deltaLh,\deltaRh$.
\begin{multline*}
\onenorm{\deltaLh} + \onenorm{\deltaRh} - \onenorm{\vphantom{\deltaLh}\deltaL} - \onenorm{\vphantom{\deltaRh}\deltaR} = \left(\onenorm{\deltaLh}-\onenorm{\vphantom{\deltaLh}\deltaL}\right)+\left(\onenorm{\deltaRh}-\onenorm{\vphantom{\deltaRh}\deltaR}\right)\\
\le - \sum_{i} \frac{1}{2}\abs{\vphantom{\deltaRh}\deltaL[i]} -\sum_{i,j} \abs{\vphantom{\deltaRh}d_{i,j}} +\sum_{i,j} \abs{\vphantom{\deltaRh}d_{i,j}} = - \frac{1}{2} \onenorm{\vphantom{\deltaRh}\deltaL}\le-\frac{1}{4} \left(\onenorm{\vphantom{\deltaLh}\deltaL} + \onenorm{\vphantom{\deltaLh}\deltaR}\right),
\end{multline*}
where the last inequality holds because we assumed $\onenorm{\deltaL}\ge\onenorm{\deltaR}$. That means that the $L_1$-norm of the combined error decreases by a factor of $\frac{3}{4}$ in every round and after $t$ steps the $L_1$-norm of the combined error $(\deltaL(t),\deltaR(t))$ will be $\onenorm{\deltaL(t)}+\onenorm{\deltaR(t)} \leq (\frac{3}{4})^{t} (\onenorm{\deltaL(0)}+\onenorm{\deltaR(0)}),$ where $\onenorm{\deltaL(0)}=\sum_{i=1}^{n}\sum_{j=1}^{m}M_{i,j}=\onenorm{\deltaR(0)}$. Thus Algorithm~\ref{alg:1} terminates in $\Ocomplex{\log(\eps^{-1})+\log\left(\sum_{i,j} M_{i,j}\right)}$ steps.
Furthermore, if we let the Algorithm~\ref{alg:1} to continue indefinitely, it will produce a sequence of states $\sv_t\eqdef\left(\Lv(t),\Rv(t)\right)\in\R^{n+m}$ for each iteration $t$ that converges to an exact solution of~\eqref{eq:3} as $t\to+\infty$. Indeed, we can show that $\sv_t$ is a Cauchy sequence, i.e., for any $t_2>t_1\ge T$ we have
\begin{equation*}
%\label{ieq:solution}
\onenorm{\sv_{t_2} - \sv_{t_1}} \le \sum_{t = T}^{+\infty} \onenorm{\sv_t - \sv_{t+1}} 
\le \sum_{t = T}^{+\infty}\left(\frac{\onenorm{\deltaL(t)}}{2}+\frac{\onenorm{\deltaR(t)}}{2}\right)
\le \sum_{t = T}^{+\infty} \frac{\eps}{2} \cdot \left(\frac{3}{4}\right)^{t-T} = \Ocomplex{\eps},
\end{equation*}
since $\onenorm{\sv_t-\sv_{t+1}}=\frac{1}{2}\max \big(\onenorm{\deltaL(t)},\onenorm{\deltaR(t)}\big)$ for any $t$ and $\onenorm{\deltaL(t)}+\onenorm{\deltaR(t)} \le (\frac{3}{4})^{t-T} (\onenorm{\deltaL(T)}+\onenorm{\deltaR(T)})\le(\frac{3}{4})^{t-T}\cdot\eps$ for any $t\ge T$. As $\R^{n+m}$ is a complete space, the Cauchy sequence $(\sv_t)_{t=0}^{\infty}$ has a limit $\sv\in\R^{n+m}$ which must be an exact solution to~\eqref{eq:3} as the combined error $\left(\deltaL(t),\deltaR(t)\right)$ goes to $0$ with $t\to+\infty$. The last equation also shows that the result $\sv_T=\left(\Lv(T),\Rv(T)\right)$ of Algorithm~\ref{alg:1} is within an $\Ocomplex{\eps}$ distance from the exact solution $\sv$.
\end{proof}

\section{Lower Bound}
\label{sec:lowerbd}
In this section we prove a lower bound of $2.25$. We construct a sequence of bipartite graphs $G_n(L,R,E)$ with $L=R=[3n]$ such that the ratio between offline and online optimum converges to $2.25$ as $n$ goes to infinity. Edges of graph $G_n$ arrive in three consecutive batches: first arrive edges of $E_1$, then $E_2$, then $E_3$. The distributions of edge values in the groups $E_1,E_2,$ and $E_3$ are as follows.
\begin{align*}
E_1 &=\{(i,i+n)| i\in[n]\}\cup\{(j+n,j)| j\in[n]\} && v_e= {}^1\!/_2\\
E_2 &=\{(i,i+2n)| i\in[n]\}\cup\{(j+2n,j)| j\in[n]\} && \{\prob{v_e= {}^3\!/_4}=0.5, ~~\prob{v_e=0}=0.5\}\\
E_3 &=\{(i,j)| i,j\in[n]\} && \{\prob{v_e=n}= {}^1\!/_{n^2}, ~~\prob{v_e=0}= 1-{}^1\!/_{n^2}\}
\end{align*}
The idea is that the online algorithm needs to decide whether to take edges with guaranteed low value of ${}^1\!/_2$ from $E_1$ without seeing the realization of the future edges, then the online algorithm receives middle value edges $v_e={}^3\!/_4$ from $E_2$ with probability $0.5$ each edge without knowing which large value rare edges $v_e=n$ from $E_3$ will arrive at the end. The optimum offline solution should take as many large edges from $E_3$ with $v_e=n$ as possible, then it matches each edge from $E_2$ with $v_e={}^3\!/_4$ (if $i\in[n]$ or $j\in[n]$ was not covered by an edge from $E_3$), and finally match all unmatched vertices $i,j\in[n]$ using low value edges $E_1$. We write an upper bound on the expected offline optimum $\optval$ as follows
\begin{multline}
\label{eq:optimum_example}
\optval = n\sum_{i,j\in[n]}\prob{(i,j)\in\OPT(\vals)}
+\sum_{i\in[n]}\left(\frac{3}{4} \prob{(i,i+2n)\in\OPT(\vals)}+\frac{1}{2}\prob{(i,i+n)\in\OPT(\vals)}\right)\\
+\sum_{j\in[n]}\left(\frac{3}{4}\prob{(j+2n,j)\in\OPT(\vals)}+\frac{1}{2}\prob{(j+n,j)\in\OPT(\vals)}\right)
\end{multline}
For the first term corresponding to the large edges $e\in E_3$ in $\OPT(\vals)$ we have
\begin{multline}
\label{eq:large_edges_opt}
n\sum_{i,j\in[n]}\prob{(i,j)\in\OPT(\vals)}\ge n\sum_{i,j}\prob{\vali[(i,j)]=n}\cdot
\prob{\forall e\in E_3 ~\text{s.t.}~ e\neq (i,j), e\cap(i,j)\neq\emptyset: \vali[e]=0}\\
\ge n\cdot n^2 \cdot \frac{1}{n^2}\cdot\left(1-\frac{2n-2}{n^2}\right)=n-O(1),
\end{multline}
where the first inequality is due to the fact that $\OPT(\vals)$ must include every large edge $\vali[(i,j)]=n$ that does not share a vertex $i$ or $j$ with any other large edge $\vali[e]=n$; to get the second inequality we apply union bound to the edges $e\in E_3$ that share a vertex with $(i,j)$. For the second term we get 
\begin{multline}
\label{eq:middle_edges_opt}
\sum_{i\in[n]}\left(\frac{3}{4} \prob{(i,i+2n)\in\OPT(\vals)}+\frac{1}{2}\prob{(i,i+n)\in\OPT(\vals)}\right)
=\sum_{i\in[n]}\prob{\forall j\in[n]~(i,j)\notin\OPT(\vals)}\cdot\\
\left(\frac{3}{4}\prob{\vali[(i,i+2n)]=\frac{3}{4}}+\frac{1}{2}\prob{\vali[(i,i+2n)]=0}\right)
\ge n\cdot \left(1-\frac{n}{n^2}\right)\cdot\left(\frac{3}{4}\cdot\frac{1}{2}+\frac{1}{2}\cdot\frac{1}{2}\right)=
\frac{5}{8}n-O(1),
%\prob{\forall e\in E_3: e\neq (i,j), e\cap(i,j)\neq\emptyset, \vali[e]=0}\\
%\ge n\cdot n^2 \cdot \frac{1}{n^2}\cdot\left(1-\frac{2n-2}{n^2}\right)=n-O(1),
\end{multline}
where the first equality holds, as the prophet should take $(i,i+2n)$ with $\vali[(i,i+2n)]=\frac{3}{4}$ if $i$ is not covered by the edges of $E_3$ in $\OPT(\vals)$; to get the first inequality we use union bound for $n$ edges $(i,k)\in E_3$. We get the same bound as \eqref{eq:middle_edges_opt} for the third term in \eqref{eq:optimum_example}. We combine lower bound~\eqref{eq:large_edges_opt}, and two lower bounds~\eqref{eq:middle_edges_opt} 
to get 
\begin{equation}
\label{eq:opt_lower_bound}
\optval\ge\frac{9}{4}n-O(1)
\end{equation}

Now, we derive a lower bound on the expected payoff of any online algorithm $\alg$. First, notice that there is no randomness in the values of the edges $e\in E_1$. Thus we can assume without loss of generality that $\alg$ chooses some fixed number of edges from $E_1$. Then the algorithm picks some random number of edges $\rvY$ from the second group $E_2$, and finally takes a random number $\rvX$ of edges from the third group $E_3$:
\[
z \eqdef \abs{\{e \in E_1| e\notin \alg(\vals)\}}\quad\quad   \rvY \eqdef \abs{\{e \in E_2| e \in \alg(\vals)\}}\quad\quad        \rvX \eqdef \abs{\{e \in E_3| e \in \alg(\vals)\}}
\]
Then the expected payoff of the algorithm is as follows
\begin{equation}
\label{eq:alg_lower_bound}
\Ex{\sum_{e \in \alg(\vals)}\vali[e]}=\frac{1}{2}\cdot(2n-z)+\frac{3}{4}\cdot\Ex{\rvY}+n\cdot\Ex{\rvX}.
\end{equation}
First, we get an upper bound on $\Ex{\rvX}$ via $\rvY$ and $z$. Indeed, if $z_{_L},z_{_R}$ are the number of vertices not covered by $\alg$ in the left $i\in[n]$ and the right $j\in[n]$ sides of $G_n$ after arrival of $E_1$ edges and $\rvYL,\rvYR$ are the number of $E_2$ edges taken by $\alg$ of the form $(i,i+2n),(j+2n,j)$, respectively, then 
\[
\Ex{\rvX}\le\sum_{i,j\in[n]}\prob{\vali[(i,j)]=n}\cdot\indic(i,j \text{ not covered})= \frac{1}{n^2}\cdot
\left(z_{_L}-\rvYL\right)\left(z_{_R}-\rvYR\right)\le\left(\frac{z-\rvY}{2n}\right)^2, 
\]
where the last inequality holds since $z=z_{_L}+z_{_R}$ and $\rvY=\rvYL+\rvYR$. We continue with the upper bound on the expected payoff~\eqref{eq:alg_lower_bound} as follows
\begin{multline}
\label{eq:alg_bound_example}
\Ex{\sum_{e \in \alg(\vals)}\vali[e]}\le n-\frac{z}{2}+\Ex{\frac{3}{4}\rvY+n\left(\frac{z-\rvY}{2n}\right)^2}\\
=n\sum_{y=0}^{2n}\left(1-\frac{1}{2}\frac{z}{n}+\frac{3}{4}\frac{y}{n}+\frac{z^2-2yz+y^2}{4n^2}\right)\prob{\rvY=y}
=n\sum_{y=0}^{2n}f\left(\frac{y}{n},\frac{z}{n}\right)\prob{\rvY=y},
\end{multline}
where $f(\beta,\gamma)\eqdef 1-\frac{\gamma}{2}+\frac{3\beta}{4}+\frac{(\gamma-\beta)^2}{4}$ for $\gamma=\frac{z}{n}$ and $\beta=\frac{y}{n}$ with $0\le\beta\le\gamma\le 2$, as $0\le\rvY\le z\le 2n$. We observe that the random variable $\rvY$ is stochastically dominated by $\texttt{Bernoulli}(\frac{1}{2},z)$, since $\alg$ should not take the edges $e\in E_2$ with $\vali[e]=0$ ($\alg$ may or may not take $\vali[e]=\frac{3}{4}$). Therefore, we have by Chebyshev's inequality
\[
\prob{\frac{\rvY}{n}\ge\frac{z}{2n}+\frac{n^{3/4}}{n}}\le \prob[y\sim\texttt{Bernoulli}(\frac{1}{2},z)]{\frac{y}{n}\ge\frac{\Ex{y}}{n}+\frac{n^{3/4}}{n}}=o(1) \quad \text{as}\quad n\to+\infty
\]
We note that $f(\beta,\gamma)\le 3$ is bounded for all values of $z$ and $y$ in the support of $\rvY$. We plug this and Chebyshev's bound in~\eqref{eq:alg_bound_example} and get
\begin{equation}
\label{eq:alg_bound_2}
\Ex{\sum_{e \in \alg(\vals)}\vali[e]}\le 
n\sum_{y=0}^{\frac{z}{2}+n^{3/4}}f\left(\frac{y}{n},\frac{z}{n}\right)\prob{\rvY=y}+n\cdot 3\cdot o(1).
\end{equation}
We have $\frac{y}{n}-\frac{z}{2n}=\beta-\frac{\gamma}{2}\le o(1)$ for all $y$ in the summation range of RHS\eqref{eq:alg_bound_2} ($y\le\frac{z}{2}+n^{3/4}$). One can rewrite $f(\beta,\gamma)$ as follows
\begin{equation*}
%\label{eq:f_lower_bound}
f(\beta,\gamma)= 1 + \underbrace{\left(\beta-\frac{\gamma}{2}\right)}_{\le o(1)}\cdot\underbrace{\left(\frac{3}{8}(2-\gamma)+\frac{\beta}{4}\right)}_{1\ge ~\text{also}~\ge 0}-\underbrace{\frac{\gamma}{16}(2-\gamma)}_{\ge 0}\quad\text{ where }\quad
\gamma\le 2.
\end{equation*}
We have $f(\beta,\gamma)= 1+o(1)$ for any $\beta=\frac{y}{n}\le \frac{\gamma}{2}+n^{-1/4}$ where $y\le \frac{z}{2}+n^{3/4}$. Thus \eqref{eq:alg_bound_2} implies
\[
\Ex{\sum_{e \in \alg(\vals)}\vali[e]}\le 
n \Big(1+o(1)\Big)\prob{\rvY\le z/2+n^{3/4}}+n\cdot o(1)\le n\Big(1+o(1)\Big).
\]
Therefore, any online algorithm $\alg$ cannot achieve better guarantee than $\frac{9}{4}-o(1)$ fraction of the prophet (see \eqref{eq:opt_lower_bound}) on the series of graphs $G_n$ as $n$ goes to infinity. This gives us a lower bound summarized in the theorem below.
\begin{theorem}
\label{th:lower_bound}
For any $\eps>0$ there is an instance such that no online algorithm has competitive ratio better than $2.25-\eps$ against the prophet.
\end{theorem}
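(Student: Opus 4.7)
The plan is to construct an explicit family of bipartite graphs $G_n$ indexed by $n\to\infty$ on which every online algorithm is at most a $(4/9+o(1))$-fraction of the offline optimum. The construction uses three arrival phases in which each new batch only weakly correlates with the previous: a first batch $E_1$ of deterministic low-value edges (value $1/2$) that already pair up all vertices of interest, a second batch $E_2$ of medium-value edges (value $3/4$ with probability $1/2$, else $0$) that can free up $[n]\cup[n]$ if the algorithm declined edges in $E_1$, and a third batch $E_3$ of large-value edges (value $n$ with probability $1/n^2$, else $0$) sitting entirely inside $[n]\times[n]$. The point of choosing the probability $1/n^2$ and the value $n$ is that in expectation $\Theta(1)$ large edges appear, each contributing value $n$, yet the algorithm cannot afford to leave all of $[n]\cup[n]$ uncovered on the chance those big edges materialize in just the right place.

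The first step is to lower bound $\optval$. The prophet's strategy is: take as many $E_3$ edges of value $n$ as possible, then complete the matching using $E_2$ edges of value $3/4$ on any remaining vertex in $[n]$, and finally fill in with $E_1$ edges of value $1/2$. A union bound shows that each large edge can be included with probability $1-O(1/n)$, contributing $n - O(1)$ to the expectation; a similar calculation, decoupled by independence, shows that each of the $2n$ middle-value edges is included with probability $1-o(1)$ and contributes $3/4\cdot 1/2 + 1/2\cdot 1/2 = 5/8$ per vertex. Summing yields $\optval\ge \frac{9}{4}n - O(1)$.

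The second and crucial step is to upper bound the expected payoff of an arbitrary online algorithm $\alg$. Since $E_1$ is deterministic, $\alg$ commits to some number $z$ of vertices in $[n]\cup[n]$ left uncovered after phase one (w.l.o.g.\ deterministically). Let $\rvY$ count the $E_2$ edges it accepts and $\rvX$ the $E_3$ edges; then
\[
\Ex{\textstyle\sum_{e\in\alg(\vals)}\vali[e]} = \tfrac{1}{2}(2n-z) + \tfrac{3}{4}\Ex{\rvY} + n\cdot\Ex{\rvX}.
\]
Only $z-\rvY$ vertices on each side are still free when $E_3$ arrives, so $\Ex{\rvX}\le ((z-\rvY)/(2n))^2$ by linearity and independence inside $E_3$. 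Plugging in and dividing by $n$ shows the payoff per $n$ is at most $f(y/n, z/n)$ averaged over $\rvY$, where $f(\beta,\gamma)=1-\gamma/2+3\beta/4+(\gamma-\beta)^2/4$.

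The third step is the concentration argument. Because $\alg$ only takes positive-value $E_2$ edges, $\rvY$ is stochastically dominated by $\mathrm{Bin}(z,1/2)$, so Chebyshev guarantees $\rvY \le z/2 + n^{3/4}$ with probability $1-o(1)$. Rewriting $f(\beta,\gamma) = 1 + (\beta - \gamma/2)\cdot(\tfrac{3}{8}(2-\gamma)+\beta/4) - \tfrac{\gamma}{16}(2-\gamma)$ exhibits it as $1+o(1)$ on the typical event, since the first correction is of order $n^{-1/4}$ while the last term is nonpositive. The residual probability contributes only $n\cdot o(1)$ because $f$ is bounded on the feasible region. Combining this with the lower bound $\optval\ge(9/4)n - O(1)$ gives a competitive ratio no better than $9/4 - o(1)$, proving Theorem~\ref{th:lower_bound}. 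The main obstacle is the third step — verifying that $f$ is maximized (up to vanishing corrections) on the diagonal $\beta=\gamma/2$ and that $\rvY$ actually concentrates there despite the algorithm's freedom; the algebraic decomposition of $f$ above is what makes this transparent.
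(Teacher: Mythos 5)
Your proposal reproduces the paper's own construction and argument essentially verbatim: the same three-batch instance $G_n$, the same lower bound $\optval\ge\tfrac{9}{4}n-O(1)$, the same variables $z,\rvY,\rvX$, the same bound $\Ex{\rvX}\le\bigl((z-\rvY)/(2n)\bigr)^2$, the same function $f$, the same Chebyshev concentration, and the same algebraic rewriting of $f$. This is correct and matches the paper's proof; the only minor slip is the phrase ``only $z-\rvY$ vertices on each side are still free,'' since in fact $z-\rvY$ is the total over both sides, and the bound $\bigl((z-\rvY)/(2n)\bigr)^2$ then follows from AM--GM on $(z_{_L}-\rvYL)(z_{_R}-\rvYR)$.
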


\bibliographystyle{alpha}
\bibliography{item-prices}

\end{document}